\documentclass[envcountsect]{llncs}
\usepackage{amsmath}
\usepackage{amssymb}
\usepackage{amsfonts}
\usepackage{array}
\usepackage{float}
\usepackage{enumerate}

\usepackage{ifthen}
\usepackage{xspace}

\usepackage{graphicx}
\usepackage{arydshln}

\usepackage{algorithmic}
\usepackage{algorithm}

\floatstyle{ruled}
\newfloat{algorithm}{thp}{lop}[section]
\floatname{algorithm}{Algorithm}
\newfloat{module}{thp}{lop}[section]
\floatname{module}{Module}

\newtheorem{thm}{Theorem}[section]

\newtheorem{lem}[thm]{Lemma}


\newcommand{\remove}[1]{}

\newenvironment{th-repeat}[1]{\begin{trivlist}
\item[\hspace{\labelsep}{\bf\noindent Theorem~\ref{#1} }]}%
{\end{trivlist}}

\newenvironment{lem-repeat}[1]{\begin{trivlist}
\item[\hspace{\labelsep}{\bf\noindent Lemma~\ref{#1} }]}%
{\end{trivlist}}

\begin{document}

\frontmatter
\pagenumbering{arabic}
\pagestyle{plain}

\title{Optimal Byzantine Resilient Convergence in Asynchronous Robot Networks}

\author{Zohir Bouzid \and Maria Gradinariu Potop-Butucaru \and
S\'{e}bastien Tixeuil}

\institute{Universit\'e Pierre et Marie Curie - Paris 6, LIP6-CNRS 7606, France}

\maketitle

\begin{abstract}
We propose the first deterministic algorithm that tolerates up to $f$ byzantine faults in $3f+1$-sized networks and performs in the asynchronous CORDA model. Our solution matches the previously established lower bound for the semi-synchronous ATOM model on the number of tolerated Byzantine robots. Our algorithm works under bounded scheduling assumptions for oblivious robots moving in a uni-dimensional space. 
\end{abstract}

\section{Introduction} 

\emph{Convergence} is a fundamental agreement primitive in robot networks and is used in the implementation of a broad class of services (\emph{e.g.} the construction of common coordinate systems or specific geometrical patterns). 
Given a set of oblivious robots with arbitrary initial locations and no agreement on a global coordinate system, \emph{convergence} requires that all robots asymptotically approach the same, but unknown beforehand, location. Convergence is hard to achieve in asynchronous systems, when robots obtain information only \emph{via} visual sensors, since they are unable to distinguish between a moving or a stationary robot. The problem becomes even harder when some robots are Byzantine (\emph{i.e.} those robots can exhibit arbitrary behavior). In that case correct robots are required to converge independently of the behavior of the faulty ones. 

Robots operate in \emph{cycles} that comprise \emph{look}, \emph{compute}, and \emph{move} phases. The look phase consists in taking a snapshot of the other robots positions using its visibility sensors. In the compute phase a robot computes a target destination based on the previous observation. The move phase simply consists in moving toward the computed destination using motion actuators. 
The robots that we consider have weak capacities: they are \emph{anonymous} (they execute the same protocol and have no mean to distinguish themselves from the others), \emph{oblivious} (they have no memory that is persistent between two cycles), and have no compass whatsoever (they are unable to agree on a common direction or orientation).

In order to capture the essence of distributed coordination in robot networks, two main computational models are proposed in the literature: the ATOM~\cite{SY99} and CORDA~\cite{Pre05} models. The main difference between the two models comes from the granularity of the execution of the \emph{look-compute-move} cycle. For the ATOM model, the cycle is atomic while in CORDA the atomicity concerns only the elementary phases in the cycle. That is, in the ATOM model, robots executing concurrently remain in the same phase (they are synchronous or semi-synchronous) while in CORDA they are asynchronous (\emph{e.g.} a robot may execute the look phase while another robot performs its move phase).

\paragraph{Related works}

Since the pioneering work of Suzuki and Yamashita~\cite{SY99}, gathering\footnote{Gathering requires  robots to actually \emph{reach} a single point within finite time regardless of their initial positions.} and convergence have been addressed in \emph{fault-free} systems for a broad class of settings. Prencipe~\cite{Pre05} studied the problem of gathering in both ATOM and CORDA models, and showed that the problem is intractable without additional assumptions such as being able to detect the multiplicity of a location (\emph{i.e.}, knowing if there is more than one robot in a given location). Flocchini \emph{et~al.}~\cite{FPS+05} proposed a gathering solution for oblivious robots with limited visibility in the CORDA model, where robots share the knowledge of a common direction given by a compass. The subsequent work by Souissi \emph{et~al.}~\cite{TR:SDY05} consider a system in which compasses are not necessarily consistent initially. In~\cite{ando1999dmp} the authors address convergence with limited visibility in fault-free environments. Convergence with inaccurate sensors and movements is addressed in \cite{CP06}. Recently,  in \cite{sirocco09} the authors study the same problem under a uniform sensing error model. Ando \emph{et~al.}~\cite{ando1999dmp} propose a gathering algorithm for the ATOM model with limited visibility. 

The case of \emph{fault-prone} robot networks was recently tackled by several academic studies. The faults that have been investigated fall in two categories: \emph{crash} faults (\emph{i.e.} a faulty robots stops executing its cycle forever) and \emph{Byzantine} faults (\emph{i.e.} a faulty robot may exhibit arbitrary behavior and movement). Of course, the Byzantine fault model encompasses the crash fault model, and is thus harder to address.
\emph{Deterministic} fault-tolerant gathering is addressed in\cite{agmon2004ftg} where the authors study a gathering protocol that tolerates one crash, and an algorithm for the ATOM model with fully synchronous scheduling that tolerates up to $f$ byzantine faults, when the number of robots is (strictly) greater than $3f$. In \cite{defago3274fta} the authors study the feasibility of \emph{probabilistic} gathering in crash-prone and Byzantine-prone environments.
\emph{Deterministic} fault-tolerant convergence was first addressed in~\cite{cohen2004rcv,cohen2005cpg}, where algorithms based on convergence to the center of gravity of the system are presented. Those algorithms work in CORDA model and tolerate up to $f$ ($n>f$) crash faults, where $n$ is the number of robots in the system. Most realted to this paper is~\cite{BGT09c}, where the authors studied convergence in byzantine-prone environments when robots move in a uni-dimensional space. In more details, \cite{BGT09c} showed that convergence is impossible if robots are not endowed with strong multiplicity detectors which are able to detect the exact number of robots that may simultaneously share the same location. The same paper defines the class of \emph{cautious} algorithms which guarantee that correct robots always move inside the range of positions held by correct robots, and proved that any cautious convergence algorithm that can tolerate $f$ Byzantine robots requires the presence of at least $2f+1$ robots in fully-synchronous ATOM networks and $3f+1$ robots in semi-synchronous ATOM networks. The ower bound for the ATOM model naturally extends to the CORDA model, yet the protocol proposed in~\cite{BGT09c} for the asynchronous CORDA model requires at least $4f+1$ robots.

\begin{table}
\centering
\begin{tabular}{|c|l|l|l|}\hline
\textbf{Reference} & \textbf{Model} & \textbf{Faults} & \textbf{Bounds} \\\hline
\cite{CP06} & ATOM & inaccurante sensors& - \\
& &movements and calc.&\\\hline
\cite{agmon2004ftg} & ATOM & crash & $f=1$ \\
& Fully-sync. ATOM& \textbf{Byzantine} & $n>3f$ \\\hline
\cite{cohen2004rcv} & ATOM & crash & $n>f$ \\\hline 
\cite{cohen2005cpg} & \textbf{CORDA} & crash & $n>f$ \\\hline
\cite{BGT09c} & Fully-Sync ATOM & \textbf{Byzantine} & $n>2f$ \\
& ATOM & \textbf{Byzantine} & $n>3f$\\
& \textbf{CORDA} & \textbf{Byzantine} & $n>4f$\\\hline
\textbf{This paper} & \textbf{CORDA} & \textbf{Byzantine} & $n>3f$\\\hline
\end{tabular}
\caption{Crash and byzantine resilience bounds for deterministic gathering and convergence}
\label{tab:results}
\end{table}

Table~\ref{tab:results} summarizes the results related to crash and byzantine resilience of gathering and convergence deterministic protocols that are known in robot netwoks. The bold values denote the least specialized (and more difficult) hypotheses.

\paragraph{Our contributions}
In this paper we consider the class of cautious algorithms, which guarantees that correct robots always move inside the range of positions held by correct robots. In this class, we propose an optimal (with respect to the number of Byzantine robots) Byzantine resilient solution for convergence when robots execute their actions in the CORDA model. That is, our solution tolerates $f$ byzantine robots in $3f+1$-sized networks, which matches the lower bound presented in~\cite{BGT09c} for the class of cautious algorithms.   

\paragraph{Outline}
The remaining of the paper is organized as follows: Section~\ref{sec:model} presents our model and robot network assumptions. Section~\ref{sec:problem} presents the formal specification of the convergence problem and recalls and necessary and sufficient conditions to achieve convergence in Byzantine prone systems, Section~\ref{sec:algorithm} describes our protocol and its complexity, while concluding remarks are presented in Section~\ref{sec:conclusion}.

\section{Model}
\label{sec:model}

Most of the notions presented in this section are borrowed from\cite{SY99,Pre01,agmon2004ftg}.  We consider a network that consists of a finite set of robots arbitrarily deployed in a uni-dimensional space. The robots are devices with sensing, computing and moving capabilities. They can observe (sense) the positions of other robots in the space and based on these observations, they perform some local computations that can drive them to other locations. 

In the context of this paper, the robots are \emph{anonymous}, in the sense that they can not be distinguished using their appearance, and they do not have any kind of identifiers that can be used during the computation. In addition, there is no direct mean of communication between them. Hence, the only way for robots to acquire information is by observing their positions. Robots have \emph{unlimited visibility}, \emph{i.e.} they are able to sense the entire set of robots. Robots are also equipped with a strong multiplicity sensor referred to as \emph{multiples detector} and denoted hereafter by $\mathcal{M}$. This sensor provides robots with the ability to detect the exact number of robots that may simultaneously occupy the same location\footnote{In \cite{BGT09c}, it is proved that $\mathcal{M}$ is necessary to deterministically solve the convergence problem in a uni-dimensional space even in the presence of a single Byzantine robot.}. We assume that the robots cannot remember any previous observation nor computation performed in any previous step. Such robots are said to be \emph{oblivious} (or \emph{memoryless}). 

A \emph{protocol} is a collection of $n$ \emph{programs}, one operating on each robot. The program of a robot consists in executing {\em Look\mbox{-}Compute\mbox{-}Move cycles} infinitely many times. That is, the robot first observes its environment (Look phase). An observation returns a snapshot of the positions of all robots within the visibility range. In our case, this observation returns a snapshot (also called \emph{configuration} hereafter) of the positions of \emph{all} robots denoted with $P(t)= \{P_1(t), ... , P_n(t)\}$. The positions of correct robots are referred as $U(t)=\{U_1(t), ... , U_m(t)\}$ where $m$ denotes the number of correct robots. Note that $U(t) \subseteq P(t)$. The observed positions are \emph{relative} to the observing robot, that is, they use the coordinate system of the observing robot. We denote by $P^i(t)= \{P_1^i(t), ... , P_n^i(t)\}$ the configuration $P(t)$ given in terms of the coordinate system of robot $i$ ($U^i(t)$ is defined similarily).
Based on its observation, a robot then decides --- according to its program --- to move or stay idle (Compute phase). When an robot decides a move, it moves to its destination during the Move phase.  An \emph{execution} $e=(c_0, \ldots, c_t, \ldots)$ of the system is an infinite sequence of configurations, where $c_0$ is the initial configuration\footnote{Unless stated otherwise, we make no specific assumption regarding the respective positions of robots in initial configurations.} of the system, and every transition $c_i \rightarrow c_{i+1}$ is associated to the execution of a subset of the previously defined actions. 

A \emph{scheduler} is a predicate on computations, that is, a scheduler define a set of \emph{admissible} computations, such that every computation in this set satisfies the scheduler predicate. A \emph{scheduler} can be seen as an entity that is external to the system and selects robots for execution. As more power is given to the scheduler for robot scheduling, more different executions are possible and more difficult it becomes to design robot algorithms. In the remaining of the paper, we consider that the scheduler is \emph{$k$-bounded} if, between any two activations of a particular robot, any other robot can be activated at most $k$ times\footnote{Note that~\cite{BGT09c} proves the impossibility result with $n=3f$ robots using a $2$-bounded scheduler.}. 

We now review the main differences between the ATOM~\cite{SY99} and CORDA~\cite{Pre01} models. In the ATOM model, whenever a robot is activated by the scheduler, it performs a \emph{full} computation cycle. Thus, the execution of the system can be viewed as an infinite sequence of rounds. In a round one or more robots are activated by the scheduler and perform a computation cycle. The \emph{fully-synchronous ATOM} model refers to the fact that the scheduler activates all robots in each round, while the \emph{semi-synchronous ATOM} model enables the scheduler to activate only a subset of the robots.
In the CORDA model, robots may be interrupted by the scheduler after performing only a portion of a computation cycle. In particular, actions (look, compute, move) of different robots may be interleaved. For example, a robot $a$ may perform a look phase, then a robot $b$ performs a look-compute-move complete action, then $a$ computes and moves based on its previous observation (that does not correspond to the current configuration anymore). As a result, the set of executions that are possible in the CORDA model are a strict superset of those that are possible in the ATOM model. So, an impossibility result that holds in the ATOM model also holds in the CORDA model, while an algorithm that performs in the CORDA model is also correct in the ATOM model. Note that the converse is not necessarily true.
 
The faults we address in this paper are \emph{Byzantine} faults. A byzantine (or malicious) robot may behave in arbitrary and unforeseeable way. In each cycle, the scheduler determines the course of action of faulty robots and the distance to which each non-faulty robot will move in this cycle. However, a robot is guaranteed to move a distance of at least $\delta$ towards its destination before it can be stopped by the scheduler.

Our convergence algorithm performs operations on multisets. A multiset or a bag $S$ is a generalization of a set where an element can have more than one occurence. The number of occurences of an element \emph{a} is referred as its \emph{multiplicity} and is denoted by $mul(a)$. The total number of elements of a multiset, including their repeated occurences, is referred as the \emph{cardinality} and is denoted by $|S|$. $\min(S)$(resp. $\max(S)$) is the smallest (resp. largest) element of $S$. If $S$ is nonempty, $range(S)$  denotes the set $[\min(S), \max(S)]$ and $diam(S)$ (diameter of $S$) denotes $\max(S) - \min(S)$.

\section{The Byzantine Convergence Problem}
\label{sec:problem}

Given an initial configuration of $n$ autonomous mobile robots ($m$ of which are correct such that $m \geq n-f$), the \emph{point convergence problem} requires that all correct robots asymptotically approach the exact same, but unknown beforehand, location. In other words, for every $\epsilon > 0$, there is a time $t_\epsilon$ from which all correct robots are within distance of at most $\epsilon$ of each other.

\begin{definition}[Byzantine Convergence]
\label{def:byz-convergence}
A system of oblivious robots satisfies the Byzantine convergence specification if and only if $\forall \epsilon > 0, \exists t_\epsilon$ such that $\forall t > t_\epsilon$, $\forall$ i,j $\leq m, \mathit{distance}(U_i(t), U_j(t)) < \epsilon$, where $U_i(t)$ and $U_j(t)$ are the positions of some \emph{correct} robots $i$ and $j$ at time $t$, and where $\mathit{distance}(a,b)$ denote the Euclidian distance between two positions. 
\end{definition}

Definition~\ref{def:byz-convergence} requires the convergence property only from the \emph{correct} robots. Note that it is impossible to obtain the convergence for all robots since Byzantine robots may exhibit arbitrary behavior and never join the position of correct robots.

In the following we recall the necessary conditions to achieve convergence in systems prone to Byzantine failures. We first focus on the definition of \emph{shrinking} algorithms (algorithms that eventually decrease the range between any two correct robots). In \cite{BGT09c} is proved that this condition is necessary but not sufficient for convergence even in fault-free environments. We then recall the defintion of \emph{cautious} algorithms (algorithms that ensure that the position of correct robots always remains inside the range of the correct robots). This condition combined with the previous one is sufficient to reach convergence in fault-free systems~\cite{BGT09c}. 

By definition, convergence aims at asymptotically decreasing the range of possible positions for the correct robots. The shrinking property captures this property. An algorithm is shrinking if there exists a constant factor $\alpha \in (0,1)$ such that starting in any configuration the range of correct robots eventually decreases by a multiplicative $\alpha$ factor. 

\begin{definition}[Shrinking Algorithm] 
\label{def:shrink}
An algorithm is \emph{shrinking} if and only if $\exists \alpha \in (0,1)$ such that $\forall t, \exists t^\prime > t$, such that $diam(U(t^\prime)) < \alpha*diam(U(t))$, where $U(t)$ is the multiset of positions of correct robots.
\end{definition}

A natural way to solve convergence is to never let the algorithm increase the diameter of correct robot positions. In this case the algorithm is called \emph{cautious}. This notion was first introduced in~\cite{dolev1986raa}. A cautious algorithm is particularly appealing in the context of Byzantine failures since it always instructs a correct robot to move inside the range of the positions held by the correct robots regardless of the locations of Byzantine ones. The following definition introduced first in \cite{BGT09c} customizes the definition of cautious algorithm proposed in \cite{dolev1986raa} to robot networks.

\begin{definition}[Cautious Algorithm]
\label{def:caut}
Let $D_i(t)$ the last destination calculated by the robot $i$ before time $t$ and let $U^i(t)$ the positions of the correct robots as seen by robot $i$ before time $t$. \footnote{If the last calculation was executed at time $t^\prime \leq t$ then $D_i(t) = D_i(t^\prime)$.} An algorithm is \emph{cautious} if it meets the following conditions:
\begin{itemize}
\item \textbf{cautiousness: } $\forall t,~D_i(t) \in range(U^i(t))$ for each robot $i$.
\item \textbf{non-triviality: } $\forall t$, if $diameter(U(t)) \neq 0$ then $\exists t^\prime>t$ and a robot $i$ such that $D_i(t^\prime)\neq U_i(t^\prime)$ (at least one correct robot changes its position).
\end{itemize}
\end{definition}

The following theorem will be further used in order to prove the correctness of our convergence algorithm.

\begin{thm}\cite{BGT09c}
\label{th:cands}
Any algorithm that is both cautious and shrinking solves the convergence problem in fault-free robot networks.
\end{thm}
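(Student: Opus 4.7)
The plan is to combine the two hypotheses by introducing an \emph{extended range} $\Phi(t)$, showing it is monotonically non-increasing under cautiousness, and then using shrinking together with scheduler fairness to force $\Phi$ to collapse to a single point.

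Define $\Phi(t)$ as the smallest closed interval containing the positions of all correct robots at time $t$, their currently pending destinations $D_i(t)$, and their most recently observed snapshots $U^i(t)$. The first step is to verify that $\Phi$ is non-increasing by a case analysis on the three event types. A Look by robot $i$ replaces the stored $U^i$ with the current $U(t) \subseteq \Phi(t^-)$ and hence cannot enlarge $\Phi$. A Compute by $i$ sets $D_i$ to some value in $range(U^i(t^-)) \subseteq \Phi(t^-)$ by cautiousness and the very definition of $\Phi$, hence cannot enlarge $\Phi$. A Move by $i$ slides its position along the segment between its previous position and its current destination, both of which already sit in $\Phi(t^-)$. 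Since $\Phi(0) = range(U(0))$, this also yields the boundedness property $\Phi(t) \subseteq range(U(0))$ for every $t$.

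Next, iterating shrinking produces times $t_0 < t_1 < \cdots$ with $diam(U(t_k)) < \alpha^k \cdot diam(U(0))$, so that $diam(U(t_k)) \to 0$ and the positions at $t_k$ cluster around some point $p$. Because $\Phi$ is non-increasing, it converges to a closed set $\Phi_\infty$ that contains $p$. The decisive step is to argue $\Phi_\infty = \{p\}$. Fairness of the scheduler (implicit in any $k$-bounded computation) ensures that every stored snapshot $U^i$ and every pending destination $D_i$ is eventually refreshed from an observation taken after some $t_k$; cautiousness then places each refreshed value within $diam(U(t_k))$ of $p$, so no point of $\Phi_\infty$ can remain bounded away from $p$. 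Consequently $diam(\Phi(t)) \to 0$, and since $U(t) \subseteq \Phi(t)$, also $diam(U(t)) \to 0$, which is exactly the specification of Definition~\ref{def:byz-convergence}.

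The main obstacle is the asynchrony of CORDA: a destination may have been computed from a snapshot that is already long outdated by the time the corresponding Move actually takes place, so the naive interval $range(U(t))$ alone is not monotone. Including the stored snapshots and the pending destinations explicitly inside $\Phi$, rather than only the current positions, is precisely what tames this subtlety and makes the monotonicity argument go through without any restriction on the relative ordering of Look, Compute, and Move events across different robots.
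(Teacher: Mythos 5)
The paper does not actually prove this theorem --- it is imported verbatim from \cite{BGT09c} --- so your argument has to be judged on its own merits. Your first half is sound: the extended envelope $\Phi(t)$ (current positions, pending destinations, and stored snapshots), the three-case monotonicity argument, and the observation that the stored snapshots must be included for the Compute case to go through in CORDA are all correct, and this is indeed the right way to exploit cautiousness under asynchrony (the multisets $UD(t)$ in Lemma~\ref{shrinking_corda} play exactly this role). The iteration of Definition~\ref{def:shrink} to obtain times $t_k$ with $diam(U(t_k)) < \alpha^k\, diam(U(t_0))$ is also fine.

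The gap is in your ``decisive step.'' Shrinking controls the diameter of the \emph{positions} along the subsequence $t_k$; it says nothing about $diam(\Phi(t_k))$, which also contains stale destinations and stale snapshots. Your claim that after a refresh at some time $s > t_k$ ``cautiousness places each refreshed value within $diam(U(t_k))$ of $p$'' does not hold: the refreshed snapshot equals $U(s)$ and the refreshed destination lies in $range(U(s))$, and between $t_k$ and $s$ the robots may have spread back out across much of $\Phi(t_k)$ by moving toward destinations computed \emph{before} $t_k$ that sit far from the $t_k$-cluster --- monotonicity of $\Phi$ permits exactly this. Bounding $diam(U(s))$ by $diam(U(t_k))$ amounts to assuming that the positions stay clustered after $t_k$, which is precisely the ``$\forall t > t_\epsilon$'' half of Definition~\ref{def:byz-convergence} you are trying to establish, so the argument is circular. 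To close the theorem one needs either a shrinking hypothesis on the \emph{extended} diameter $diam(\Phi(t))$ (note that Lemma~\ref{shrinking_corda} is deliberately proved for $diameter(UD(t))$ rather than $diameter(U(t))$, precisely so that monotonicity of the envelope can finish the job), or an additional argument showing that a tight cluster of positions forces the pending destinations and stored snapshots to collapse as well. A smaller, repairable issue: the existence of a single accumulation point $p$ of the clusters $U(t_k)$ is not automatic (they may wander inside $\Phi_\infty$) and should be obtained by a compactness/subsequence argument.
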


\section{Deterministic Asynchronous Convergence}
\label{sec:algorithm}

In this section we propose a deterministic convergence algorithm and prove its correctness in the CORDA model under a k-bounded scheduler.
The idea of Algorithm \ref{alg-convergence-3f} is as follows: each robot computes the median of the positions of the robots seen in its last Look phase ignoring the $f$ largest positions if they are larger than his own position and the $f$ smallest positions if they are smaller than his own position.

Algorithm \ref{alg-convergence-3f} uses two functions, $trim_f^i()$ and $center()$.
The choice of the  function $trim_f^i()$ makes the difference between this algorithm and that of~\cite{BGT09c}. Indeed, in~\cite{BGT09c} the trimming function removes the $f$ largest and the $f$ smallest values from the multiset given in parameter. That is, the returned multiset does not depend on the position of the calling robot. In Algorithm~\ref{alg-convergence-3f}, $trim_f^i()$ removes among the $f$ largest positions \emph{only} those that are greater than the position of the calling robot $i$. Similarly, it removes among the $f$ smallest positions only those that are smallest than the position of the calling robot. 

Formally, let $minindex_i$ be the index of the minimum position between $P_i(t)$ and $P_{f+1}(t)$ (if $P_i(t)<P_{f+1}(t)$ then $minindex_i$ is equal to $i$, otherwise it is equal to $f+1$). Similarily, let $maxindex_i$ be the index of the maximum position between $P_i(t)$ and $P_{n-f}(t)$ (if $P_i(t)> P_{n-f}(t)$ then $maxindex_i$ is equal to $i$, otherwise it is equal to $n-f$). $trim_f^i(P(t))$ is the multiset consisting of positions $\{P_{minindex_i}(t), P_{minindex_i+1}(t), ..., P_{maxindex_i}(t)\}$.
$center()$ returns the median point of the input range. The two functions are illustrated in Figure \ref{fig-illustration-algo}) . 

\begin{figure}[htbp]
\begin{center}
\includegraphics[height=4cm,width=6cm]{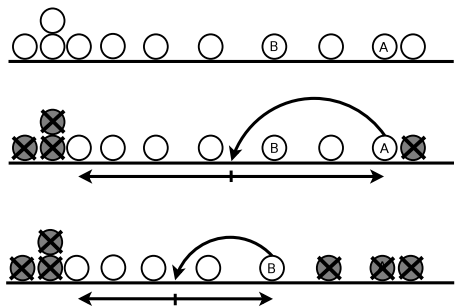}
\caption{Illustration of functions $trim_f$ and $center$ for robots $A$ and $B$.}
\label{fig-illustration-algo}
\end{center}
\end{figure}

\begin{algorithm}
\caption{Byzantine Tolerant Convergence}          
\label{alg-convergence-3f}                  
\begin{algorithmic}
\STATE \textbf{Functions}:\\

\STATE - $trim_f^i(P(t))$: removes up to $f$ largest positions that are larger than $P_i(t)$ and up to $f$ smallest positions that are smaller than $P_i(t)$ from the multiset $P(t)$ given in parameter.
\STATE - $center$: returns the point that is in the middle of the range of points given in parameter.
\STATE

\STATE   \textbf{Actions}:\\

\STATE move towards $center(trim_f^i(P(t)))$
\end{algorithmic}
\end{algorithm}

In the following we prove the correctness of Algorithm \ref{alg-convergence-3f} in the CORDA model under a $k$-bounded scheduler. In order to show that Algorithm \ref{alg-convergence-3f} converges, we prove first that it is cautious then we prove that it satisfies the specification of a shrinking algorithm. Convergence then follows from Theorem \ref{th:cands}.

\subsection{Algorithm \ref{alg-convergence-3f} is cautious}

In this section we prove that Algorithm \ref{alg-convergence-3f}  is a cautious algorithm (see Definition \ref{def:caut}) for $n>3f$. The following lemma states that the range of the trimmed multiset $trim_f^i(P(t))$ is contained in the range of correct positions.

\begin{lem}
\label{trim_sub_range}
Let $i$ be a correct robot executing Algorithm \ref{alg-convergence-3f}, 
it holds that 
\[range(trim_f^i(P(t))) \subseteq range(U(t))
\]
\end{lem}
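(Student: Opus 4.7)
The plan is to show separately that $\min(trim_f^i(P(t))) \geq \min(U(t))$ and that $\max(trim_f^i(P(t))) \leq \max(U(t))$, since containment of ranges is equivalent to these two inequalities. The two bounds are symmetric, so I would focus on the lower one and remark that the upper one follows by the mirror argument.

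For the lower bound I would split on the case disjunction built into the definition of $minindex_i$. If $P_i(t) < P_{f+1}(t)$, then by definition $minindex_i = i$, so the smallest element of $trim_f^i(P(t))$ is $P_i(t)$ itself; since robot $i$ is correct, $P_i(t) \in U(t)$ and therefore $P_i(t) \geq \min(U(t))$. If on the other hand $P_i(t) \geq P_{f+1}(t)$, then $minindex_i = f+1$, so the smallest element of the trimmed multiset is $P_{f+1}(t)$. Here I would use the pigeonhole argument: among the $f+1$ smallest positions $P_1(t),\dots,P_{f+1}(t)$ of the full configuration, at most $f$ can belong to Byzantine robots, so at least one of them, call it $P_j(t)$, is correct. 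Then $\min(U(t)) \leq P_j(t) \leq P_{f+1}(t)$, giving the required inequality.

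The symmetric argument, using $maxindex_i$ and the pigeonhole applied to the $f+1$ largest positions, yields $\max(trim_f^i(P(t))) \leq \max(U(t))$, and combining the two inequalities closes the proof. There is no real obstacle: the only subtlety is being careful that the case $P_i(t) \geq P_{f+1}(t)$ still places $P_{f+1}(t)$ above $\min(U(t))$, which the pigeonhole handles cleanly, and that the argument does not require any assumption on whether robot $i$ itself lies near the extremes.
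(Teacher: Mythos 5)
Your proof is correct and follows essentially the same route as the paper: identify $\min(trim_f^i(P(t)))=\min\{P_i(t),P_{f+1}(t)\}$ and $\max(trim_f^i(P(t)))=\max\{P_i(t),P_{n-f}(t)\}$, handle $P_i(t)$ by correctness of robot $i$, and handle $P_{f+1}(t)$ and $P_{n-f}(t)$ by the pigeonhole fact that at most $f$ of the $f+1$ extremal positions can be Byzantine (the paper phrases this as a proof by contradiction rather than directly). Your version is in fact slightly leaner, since you observe that only the one-sided inequalities $\min(trim_f^i(P(t)))\geq\min(U(t))$ and $\max(trim_f^i(P(t)))\leq\max(U(t))$ are needed, whereas the paper also establishes the (superfluous for this lemma) opposite bounds using $n>3f$.
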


\begin{proof}
We prove that for any correct robot, $i$, the following conditions hold: 
\begin{enumerate}
\item $\forall t~ min(trim_f^i(P(t))) \in range(U(t)).$
\item $\forall t~ max(trim_f^i(P(t))) \in range(U(t)).$
\end{enumerate}

\begin{enumerate}
\item 
By definition, $min(trim_f^i(P(t)))=min\{P_i(t), P_{f+1}(t)\}$. 
Hence proving Property (1) reduces to proving 
$P_i(t) \in range(U(t))$ and $P_{f+1}(t) \in range(U(t))$.
Similarly, proving property (2) reduces to proving $P_i(t) \in range(U(t))$ and $P_{n-f}(t) \in range(U(t))$

\begin{itemize}
\item $P_i(t) \in range(U(t))$ directly follows from the assumption that  robot $i$ is correct.
\item $P_{f+1}(t) \in range(U(t))$.
Suppose the contrary: there exists some time instant $t$ such that $P_{f+1}(t) \notin range(U(t))$ 
and prove that this leads to a contradiction. If $P_{f+1}(t) \notin range(U(t))$ 
then either $P_{f+1}(t) < U_1(t)$ or $P_{f+1}(t) > U_m(t)$.

\begin{itemize}
\item If $P_{f+1}(t) < U_1(t)$ then there are at least $f+1$ positions $\{P_1(t),$ $P_2(t),$ $\ldots,$ $P_f(t),$ $P_{f+1}(t)\}$ 
that are smaller than $U_1(t)$ which is the \emph{first} correct position in the network at time $t$. This means that 
there would be at least $f+1$ byzantine robots in the system. But this contradicts the assumptions 
that at most $f$ byzantine robots are present in the system.

\item If $P_{f+1}(t) > U_m(t)$ then since $n>3f$ there are more than $f$ positions $\{P_{f}(t), P_{f+1}(t), ...,P_n(t)\}$ 
that are greater than $U_m(t)$, which is the \emph{last} correct position in the system at time $t$. This also leads to a contradiction.

\end{itemize}
\end{itemize}

\item In the following we prove that $P_i(t) \in range(U(t))$ and $P_{n-f}(t) \in range(U(t))$.

\begin{itemize}

\item $P_i(t) \in range(U(t))$since robot $i$ is correct.

\item $P_{n-f}(t) \in range(U(t))$.
Assume the contrary: there exists some time instant $t$ such that $P_{n-f}(t) \notin range(U(t))$ 
and prove that this leads to a contradiction. If $P_{n-f}(t) \notin range(U(t))$ then either $P_{n-f}(t) < U_1(t)$ or $P_{n-f}(t) > U_m(t)$.

\begin{itemize}
\item If $P_{n-f}(t) > U_m(t)$ then there are at least $f+1$ positions $\{P_{n-f}(t),$  $\ldots,$ $P_{n-1}(t),$ $P_{n}(t)\}$ that are greater than $U_m(t)$ which is the \emph{last} correct position in the network at time $t$. 
It follows that there would be at least $f+1$ byzantine robots in the system which contradicts the hypothesis on the maximal number of faulty robots in the system.  

\item If $P_{n-f}(t) < U_1(t)$ then since $n>3f$ there are more than $f$ positions $\{P_1(t), P_2(t), ...,P_{n-f}(t)\}$ 
that are smaller than $U_1(t)$, which is the \emph{first} correct
position in the system at time $t$. 
This also leads to a contradiction.

\end{itemize}
\end{itemize}

\end{enumerate}
\end{proof}

A direct consequence of the above property is that correct robots always compute a destination within the range of positions held by correct robots, whatever the behavior of Byzantine ones. Thus, the diameter of positions held by correct robots never increases. Consequently, the algorithm is cautious. The formal proof is proposed in the following lemma.

\begin{lem}
\label{cautious-FS}
Algorithm \ref{alg-convergence-3f} is cautious for $n>3f$.
\end{lem}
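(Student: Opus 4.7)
The plan is to verify the two clauses of Definition~\ref{def:caut} separately. The cautiousness clause follows almost immediately from Lemma~\ref{trim_sub_range}; the non-triviality clause is where the real work lies and I expect it to be the main obstacle.

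For cautiousness, the destination $D_i(t)$ of a correct robot $i$ is, by Algorithm~\ref{alg-convergence-3f}, equal to $center(trim_f^i(P^i(\tau)))$ where $\tau \leq t$ is $i$'s most recent Look phase. The function $center$ returns a point of the range of its argument, so $D_i(t) \in range(trim_f^i(P^i(\tau)))$. Applying Lemma~\ref{trim_sub_range} to $i$'s own observation yields $range(trim_f^i(P^i(\tau))) \subseteq range(U^i(\tau)) = range(U^i(t))$, which gives $D_i(t) \in range(U^i(t))$ as required.

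For non-triviality I would show that whenever $diameter(U(t)) > 0$, the currently leftmost and currently rightmost correct robots, call them $i_\ell$ and $i_r$, cannot both compute a destination equal to their own position. Since at most $f$ byzantine positions can lie strictly below $U_1(t)$, the sorted index of $P_{i_\ell}(t)$ is at most $f+1$, so after trimming the multiset seen by $i_\ell$ runs from $P_{i_\ell}(t) = U_1(t)$ up to $P_{n-f}(t)$; symmetrically, $i_r$'s trimmed multiset runs from $P_{f+1}(t)$ up to $P_{i_r}(t) = U_m(t)$. If both midpoints coincided with the two robots' own positions we would need $P_{n-f}(t) = U_1(t)$ and $P_{f+1}(t) = U_m(t)$. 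But $n > 3f$ implies $f+1 \leq n-f$, so by the sorted order $P_{f+1}(t) \leq P_{n-f}(t)$, hence $U_m(t) \leq U_1(t)$, contradicting $diameter(U(t)) > 0$.

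The $k$-bounded scheduler guarantees that $i_\ell$ and $i_r$ each complete a new Look-Compute within a bounded number of activations after $t$. I would then conclude non-triviality either by pointing to the activation of whichever of the two extreme robots still computes a non-trivial destination at that later instant, or by observing that if the diameter has collapsed to $0$ in the meantime then some correct robot must have moved in the intervening interval; either case supplies the required $t^\prime > t$ with $D_i(t^\prime) \neq U_i(t^\prime)$. The delicate bookkeeping about how the identities of the extreme correct robots may change between $t$ and the activation instant is the technical subtlety that has to be handled carefully, but it does not affect the core index-counting argument above.
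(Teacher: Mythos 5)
Your treatment of the cautiousness clause is exactly the paper's proof: apply Lemma~\ref{trim_sub_range} to conclude $center(trim_f^i(P(t))) \in range(U(t))$, hence every computed destination lies in the range of correct positions. Where you diverge is that the paper's proof stops there and never addresses the non-triviality clause of Definition~\ref{def:caut} at all, whereas you supply an explicit argument for it. Your index-counting argument is sound: for the leftmost correct robot the trimmed range is $[U_1(t), P_{n-f}(t)]$ and for the rightmost it is $[P_{f+1}(t), U_m(t)]$, so both midpoints coinciding with the robots' own positions forces $U_m(t) = P_{f+1}(t) \leq P_{n-f}(t) = U_1(t)$, contradicting positive diameter. (One small imprecision: with ties the sorted index of the leftmost correct robot need not be at most $f+1$, but the conclusion survives because $\min\{P_{i_\ell}(t), P_{f+1}(t)\} = U_1(t)$ follows from $P_{f+1}(t) \geq U_1(t)$, which is Lemma~\ref{trim_sub_range} again.) The asynchrony bookkeeping you flag can be closed cleanly by contraposition: if no correct robot's destination ever differs from its position after $t$, the configuration is frozen, the scheduler eventually activates both (now fixed) extreme robots on that frozen configuration, and your counting argument yields the contradiction. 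So your proposal is not merely a different route --- it actually proves more of the lemma than the paper does, which is worth noting since the paper's own argument implicitly treats non-triviality as obvious.
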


\begin{proof}
According to Lemma \ref{trim_sub_range}, $range(trim_f^i(P(t))) \subseteq
range(U(t))$ for each correct robot $i$, 
thus $center(trim_f^i(P(t))) \in range(U(t))$. 
It follows that all destinations computed by correct robots are
located 
inside $range(U(t))$ which proves the lemma.
\end{proof}

\subsection{Algorithm \ref{alg-convergence-3f} is shrinking}

In this section we prove that Algorithm~\ref{alg-convergence-3f} is a shrinking algorithm (see Definition~\ref{def:shrink})
The following lemma states that a robot can not compute a destination that is far from its current position by more than half the diameter of correct positions.
More specifically, a robot located on one end of the network can not move to the other end in a single movement. 

Interestingly, the property of lemma \ref{lem-halfdiameter} is guaranteed even though robots are not able to figure out the range of correct positions nor to compute the corresponding diameter. 
The bound on the movements of robots is achieved by taking into account the position of the calling robot when computing the trimming function. 
It is important to note that if all robots compute their destinations using the same trimming function irrespective of the position of the calling robot, convergence requires the presence of more than $4f$ robots to tolerate the presence of up to $f$ Byzantine robots~\cite{BGT09c}.

\begin{lem}
\label{lem-halfdiameter}
$\forall t, \forall i$, correct robot, if $i$ computes its destination point at time $t$, then at $t$, $distance(U_i^i(t), D_i(t)) \leq diameter(U^i(t)) / 2$
\end{lem}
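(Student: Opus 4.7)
The plan is to unpack the destination as the midpoint of an interval that contains $U_i^i(t)$ and then apply Lemma~\ref{trim_sub_range} to bound the length of that interval.

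First, I would write $m_i = \min(trim_f^i(P(t)))$ and $M_i = \max(trim_f^i(P(t)))$, so that $D_i(t) = center(trim_f^i(P(t))) = (m_i + M_i)/2$. The critical observation, which the design of $trim_f^i$ is built around, is that robot $i$'s own position is never trimmed away: by the explicit formulas recalled just before Lemma~\ref{trim_sub_range}, $m_i = \min\{P_i(t), P_{f+1}(t)\}$ and $M_i = \max\{P_i(t), P_{n-f}(t)\}$, so $m_i \leq P_i(t) \leq M_i$. (Here $P_i(t) = U_i^i(t)$ since $i$ is correct.)

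From $U_i^i(t) \in [m_i, M_i]$ and $D_i(t)$ being the midpoint of $[m_i, M_i]$, a direct calculation gives
\[
distance(U_i^i(t), D_i(t)) \leq \frac{M_i - m_i}{2}.
\]
It then remains to bound $M_i - m_i$ by $diameter(U^i(t))$. This is exactly where Lemma~\ref{trim_sub_range} enters: it says $range(trim_f^i(P(t))) \subseteq range(U(t))$, which in terms of $i$'s local coordinate system yields $[m_i, M_i] \subseteq [\min(U^i(t)), \max(U^i(t))]$, and hence $M_i - m_i \leq diameter(U^i(t))$. Chaining the two inequalities gives the claimed bound.

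I do not expect a real obstacle here: the argument is essentially the observation that $trim_f^i$ was engineered precisely so that the calling robot's position is one of the two endpoints (or strictly interior to) the trimmed interval, which is what forces the midpoint to stay within half the diameter of the robot itself. The only step that requires care is justifying $U_i^i(t) \in [m_i, M_i]$ from the definition of $trim_f^i$; once this asymmetry of the trimming with respect to $i$ is recorded explicitly, the rest is a one-line midpoint estimate followed by an application of Lemma~\ref{trim_sub_range}.
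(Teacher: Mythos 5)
Your proof is correct and uses essentially the same ingredients as the paper's: the calling robot's own position is never trimmed away and hence lies in $[\min(trim_f^i(P(t))),\max(trim_f^i(P(t)))]$, the destination is the midpoint of that interval, and Lemma~\ref{trim_sub_range} bounds the interval's length by $diameter(U^i(t))$. The paper merely packages this as a proof by contradiction (a point more than half the diameter below the center forces a symmetric point above it, making the trimmed range too wide), whereas you argue directly; the substance is identical.
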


\begin{proof}
Suppose the contrary: $distance(U_i^i(t), D_i(t)) > diameter(U^i(t)) /
2$ for some robot 
$i$ at time $t$. Assume without loss of generality that $U_i^i(t) <
D_i(t)$ (the other case is symmetric).
So $U_i^i(t) < D_i(t)+ diameter(U^i(t))/2$ and we prove that this is impossible.

Recall that $D_i(t)$ is the center of $trim_f^i(P(t))$ which implies
that if there exists 
some point $p \in trim_f^i(P(t))$ with $p<D_i(t)$ such that
$distance(p,D_i(t)) > diameter(U(t))/2$, 
then there must exist another point $q \in trim_f^i(P(t))$ with
$q>D_i(t)$ such that 
$distance(D_i(t),q) > diameter(U(t))/2$.
This implies that $distance(p,q)>range(U^i(t))$. Since both $p$
and $q$ belong to 
$trim_f^i(P(t))$ then $diameter(trim_f^i(P(t)) \geq distance(p,q) > diameter(U^i(t))$.
This contradicts lemma \ref{trim_sub_range} which states that $range(trim_f^i(P(t)) \subseteq range(U(t))$.
\end{proof}

The following lemmas describe some important properties on the destination points computed by correct robots which will be used in proving the shrinkingness of Algorithm~\ref{alg-convergence-3f}. These properties are verified whatever the positions of Byzantine robots are, and thus they capture the limits of the influence of Byzantine robots on the actions undertaken by correct robots.

The next lemma shows that the correct positions $\{U_{f+1}(t), ... ,U_{m-f}(t) \}$ are always included in 
the trimmed range (the output range of the function $trim_f^i$) regardless of the positions of Byzantine robots.

\begin{lem}
\label{inside_range}
It holds that $range(trim_f(U(t))) \subseteq  range(trim_f(P(t)))$.
\end{lem}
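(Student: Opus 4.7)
The plan is to read $\mathit{trim}_f(U(t))$ and $\mathit{trim}_f(P(t))$ as the symmetric trims $\{U_{f+1}(t),\ldots,U_{m-f}(t)\}$ and $\{P_{f+1}(t),\ldots,P_{n-f}(t)\}$ respectively, so that the lemma reduces to proving the two endpoint inequalities
\[
P_{f+1}(t) \leq U_{f+1}(t) \qquad \text{and} \qquad U_{m-f}(t) \leq P_{n-f}(t).
\]
Both inequalities are pure counting facts once we remember that $U(t)$ is a sub-multiset of $P(t)$ and that at most $f$ of the entries of $P(t)$ correspond to Byzantine robots.

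For the first inequality, I would argue as follows. The positions $U_1(t),\ldots,U_{f+1}(t)$ form $f+1$ correct positions all of which are $\leq U_{f+1}(t)$. Since these are also elements of $P(t)$, at least $f+1$ entries of the sorted sequence $P_1(t) \leq P_2(t) \leq \cdots \leq P_n(t)$ are $\leq U_{f+1}(t)$, which forces $P_{f+1}(t) \leq U_{f+1}(t)$. The symmetric argument for $U_{m-f}(t) \leq P_{n-f}(t)$ uses that $U_{m-f}(t),\ldots,U_m(t)$ give $f+1$ correct positions in $P(t)$ that are $\geq U_{m-f}(t)$, so at least $f+1$ entries of the sorted $P(t)$ lie at or above $U_{m-f}(t)$, yielding $P_{n-f}(t) \geq U_{m-f}(t)$.

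Combining the two inequalities gives the interval inclusion
\[
[\,U_{f+1}(t),\, U_{m-f}(t)\,] \subseteq [\,P_{f+1}(t),\, P_{n-f}(t)\,],
\]
which is exactly $\mathit{range}(\mathit{trim}_f(U(t))) \subseteq \mathit{range}(\mathit{trim}_f(P(t)))$. There is essentially no obstacle here; the only subtlety is to pin down the indexing convention (that $U_i(t)$ and $P_i(t)$ are meant as sorted orderings, and that $m \geq n-f$ guarantees $m-f \geq f+1$ so the trimmed multiset of $U(t)$ is nonempty). Once that is made explicit at the start of the proof, the two counting arguments close the lemma.
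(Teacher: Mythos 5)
Your proof is correct and follows essentially the same counting argument as the paper's: both reduce the claim to comparing the order statistics $P_{f+1}(t)$ with $U_{f+1}(t)$ and $P_{n-f}(t)$ with $U_{m-f}(t)$, using only that $U(t)$ is a sub-multiset of $P(t)$ with $m \geq n-f$ and $m \geq 2f+1$ (the paper phrases the same counts as a proof by contradiction). Your reading of the superscript-free $trim_f$ as the symmetric trim matches the paper's intent, and since $[P_{f+1}(t), P_{n-f}(t)] \subseteq range(trim_f^i(P(t)))$ for every robot $i$, your version also covers the robot-dependent trim under which the lemma is later invoked.
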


\begin{proof}
We prove that: 
\begin{enumerate}
\item $\forall t~ U_{f+1}(t) \in range(trim_f(P(t))).$
\item $\forall t~ U_{m-f}(t) \in range(trim_f(P(t))).$
\end{enumerate}

\begin{enumerate}
\item 
Suppose that $U_{f+1}(t) \notin range(trim_f(P(t)))$. 
Then either 
\[U_{f+1}(t) < \min(trim_f(P(t)))
\] or 
\[U_{f+1}(t) > \max(trim_f(P(t)))
\]

\begin{itemize}

\item If $U_{f+1}(t) < \min(trim_f(P(t)))$ then there are at least $f+1$ positions $\{U_1(t), ..., U_{f+1}(t)\}$ which are smaller than $\min(trim_f(P(t)))$. This contradicts the definition of $trim_f(P(t))$ (at most $f$ among the smallest elements of $P(t)$ are removed). 

\item If $U_{f+1}(t) > \max(trim_f(P(t)))$ and since $|U(t)| > 2f$
(because $n>3f$), then there are also at least $f+1$ 
positions in $U(t)$ greater than $\max(trim_f(P(t)))$, which also leads to a contradiction.

\end{itemize}

\item The property is symmetric to the precedent one.
\end{enumerate}
\end{proof}

Let $D(t)$ be the set of destinations computed with 
Algorithm~\ref{alg-convergence-3f} in systems with $n>3f$, and let $UD(t)$ be the union of $U(t)$ and $D(t)$.
If a robot $i$ executed its last Look action at time $t^\prime \leq t$, then $UD^i(t)= UD(t^\prime)$.
The following lemma proves that the destination computed by each correct robot $i$ is always 
within the range $[(min(UD^i(t)) + U_{m-f}^i(t))/2 , (U_{f+1}^i(t) +
max(UD^i(t)))/2]$ independently of the positions of Byzantine robots.

\begin{lem}
\label{destination_mid}
The following properties hold: 
\item $\forall i$, each destination point calculated by a correct robot
$i$ at time $t$ is (1) smaller than $(U_{f+1}^i(t) + max(UD^i(t)))/2$ and (2) greater than 
$(min(UD^i(t)) + U_{m-f}^i(t))/2$.
\end{lem}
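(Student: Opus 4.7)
The plan is to bound $\min(trim_f^i(P(t)))$ and $\max(trim_f^i(P(t)))$ separately, and then plug them into the midpoint formula $D_i(t) = \bigl(\min(trim_f^i(P(t))) + \max(trim_f^i(P(t)))\bigr)/2$, which follows directly from the definition of $center$ used in Algorithm~\ref{alg-convergence-3f}. Since (1) and (2) differ only by a left--right symmetry in the unidimensional configuration, I will write out (1) in full and obtain (2) by exchanging $\min$ with $\max$ and $U_{f+1}$ with $U_{m-f}$.

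For the minimum side, the explicit description of $trim_f^i$ given just before the lemma yields $\min(trim_f^i(P(t))) = \min\{P_i^i(t), P_{f+1}^i(t)\} \leq P_{f+1}^i(t)$. A simple counting argument then bounds $P_{f+1}^i(t)$ by $U_{f+1}^i(t)$: the $f+1$ correct positions $U_1^i(t), \ldots, U_{f+1}^i(t)$ are all at most $U_{f+1}^i(t)$, hence at least $f+1$ of the observed positions lie at or below $U_{f+1}^i(t)$, forcing $P_{f+1}^i(t) \leq U_{f+1}^i(t)$.

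For the maximum side, I invoke Lemma~\ref{trim_sub_range} to obtain $\max(trim_f^i(P(t))) \in range(U^i(t))$, so $\max(trim_f^i(P(t))) \leq \max(U^i(t)) \leq \max(UD^i(t))$ (using $U^i(t) \subseteq UD^i(t)$). Substituting these two bounds into the midpoint formula yields (1). The symmetric argument, combining $P_{n-f}^i(t) \geq U_{m-f}^i(t)$ (the same counting applied to the right tail) with $\min(trim_f^i(P(t))) \geq \min(U^i(t)) \geq \min(UD^i(t))$, gives (2).

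I do not anticipate a significant obstacle. Once the midpoint formula is written down, each inequality reduces to two basic facts: (a) the index cutoff at $f+1$ (resp.\ $n-f$) in the trimmed multiset is controlled by the corresponding correct position $U_{f+1}$ (resp.\ $U_{m-f}$) via the counting argument above, and (b) the trimmed range lies inside $range(U^i(t))$ by Lemma~\ref{trim_sub_range}. The only point to keep in mind is timing: the snapshot $P^i(t)$, the correct positions $U^i(t)$, and the set $UD^i(t)$ all refer to the moment $t' \leq t$ of robot $i$'s last Look action, so Lemma~\ref{trim_sub_range} applies to precisely the snapshot being trimmed, and no drift between $t'$ and $t$ needs to be accounted for.
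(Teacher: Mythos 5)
Your proposal is correct and rests on the same two facts as the paper's proof: that $\min(trim_f^i(P(t)))\leq U_{f+1}^i(t)$ and that $\max(trim_f^i(P(t)))\leq \max(UD^i(t))$ (via Lemma~\ref{trim_sub_range}), combined with $D_i$ being the midpoint of the trimmed range. The only presentational difference is that the paper argues by contradiction --- assuming $D_i>(U_{f+1}^i(t)+\max(UD^i(t)))/2$, using the symmetry of the center to produce a mirror point $p$ above $\max(UD^i(t))$, and contradicting Lemma~\ref{trim_sub_range} --- and it obtains the first fact by citing Lemma~\ref{inside_range}, whereas you write $D_i=\bigl(\min(trim_f^i(P(t)))+\max(trim_f^i(P(t)))\bigr)/2$ explicitly and re-derive the bound $\min(trim_f^i(P(t)))\leq P_{f+1}^i(t)\leq U_{f+1}^i(t)$ by the same counting argument that underlies Lemma~\ref{inside_range}. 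Your direct computation is arguably cleaner; note only that, like the paper's own argument, it establishes the inequalities in non-strict form, which is all that is needed where the lemma is later applied.
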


\begin{proof}
Let $d_1$ be the distance between $U_{f+1}^i(t)$ and $max(UD^i(t))$.
\begin{enumerate}
\item 
We suppose the contrary: there exists some calculated destination point $D_i$ by some correct robot 
$i$ at time $t$ such that \[D_i > (U_{f+1}^i(t) + max(UD^i(t)))/2
\] and we prove that this leads to a contradiction.
\[D_i > (U_{f+1}^i(t) + max(UD^i(t)))/2
\] implies that $U_{f+1}^i(t) <
D_i-d_1/2$. And by Lemma~\ref{inside_range}, 
$U_{f+1}^i$(t) is inside 
\[range(trim_f^i(P^i(t)))
\] which means that
there is a position inside 
$range(trim_f^i(P^i(t)))$ which is smaller than $D_i-d_1/2$.
Hence there must exists a position inside \[range(trim_f^i(P^i(t)))\]
say $p$, 
such that $p>D_i+d_1/2$ because $D_i$ is the center of $trim_f^i(P^i(t))$. 
$U_{f+1}^i(t) < D_i-d_1/2$ and $p>D_i+d_1/2$ implies that 
$distance(U_{f+1}(t), p)> distance (U_{f+1}(t), max(UD^i(t))$ which in
turn implies that $p>max(UD^i(t))$. But $p \in
range(trim_f^i(P^i(t)))$, it 
follows that \[max(trim_f^i(P^i(t))) > max(UD^i(t))\] which contradicts Lemma \ref{trim_sub_range} and thereby proves our lemma.

\item Symmetric to the precedent property.
\end{enumerate}
\end{proof}

\begin{lem}
\label{uf-less-st}
Let $S(t)$ be a multiset of $f+1$ arbitrary elements of U(t). The following properties hold: 
(1) $\forall t,~ U_{f+1}(t) \leq \max(S(t))$
and (2) $\forall t,~ U_{m-f}(t) \geq \min(S(t))$
\end{lem}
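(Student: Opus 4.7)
\medskip

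The plan is a straightforward pigeonhole / counting argument on the sorted multiset $U(t) = \{U_1(t) \leq U_2(t) \leq \cdots \leq U_m(t)\}$. I would prove part~(1) and observe that (2) follows by a symmetric argument (swap ``smallest'' with ``largest'' and $U_{f+1}$ with $U_{m-f}$).

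For part~(1), I would argue by contradiction: assume there is some time $t$ and some choice of a sub-multiset $S(t) \subseteq U(t)$ with $|S(t)| = f+1$ such that $\max(S(t)) < U_{f+1}(t)$. Then every one of the $f+1$ elements of $S(t)$ is strictly smaller than $U_{f+1}(t)$. But in the sorted enumeration of $U(t)$, the elements that are strictly smaller than $U_{f+1}(t)$ can only come from the indices $1,\ldots,f$ (since $U_j(t) \geq U_{f+1}(t)$ for every $j \geq f+1$). Hence at most $f$ elements of $U(t)$ (counted with multiplicity) are candidates to lie in $S(t)$, which contradicts $|S(t)| = f+1$.

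Part~(2) is the mirror image: if $\min(S(t)) > U_{m-f}(t)$, then the $f+1$ elements of $S(t)$ must all lie strictly above $U_{m-f}(t)$, i.e., among the indices $m-f+1, \ldots, m$ of $U(t)$, which again only contains $f$ positions counted with multiplicity, yielding a contradiction.

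I do not expect any real obstacle: the only subtlety is being careful about multiplicities, since $U(t)$ is a multiset and several correct robots may share the same position. The strict inequality in ``strictly smaller than $U_{f+1}(t)$'' is what prevents us from accidentally including a copy of $U_{f+1}(t)$ itself among the $f$ candidate positions, which is why the counting works cleanly.
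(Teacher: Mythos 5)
Your proof is correct and follows essentially the same route as the paper's: a contradiction argument showing that $\max(S(t)) < U_{f+1}(t)$ would force $f+1$ elements of $U(t)$ to be strictly below the $(f+1)$-th smallest correct position, which is impossible, with part (2) by symmetry. Your added care about multiplicities is a welcome clarification but does not change the argument.
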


\begin{proof}
\begin{enumerate}
\item Assume the contrary: $U_{f+1}(t) > \max(S(t))$. This means that
$U_{f+1}(t)$ is 
strictly greater than at least $f+1$ elements of U(t), which leads to
a contradiction 
($U_{f+1}(t)$ is by definition the $(f+1)$-th correct position in $U(t)$).
\item The property is symmetric to the precedent.
\end{enumerate}
\end{proof}

The next lemma generalizes and extends the properties of Lemmas
\ref{inside_range} and \ref{destination_mid} 
(proven for a fixed time instant) to a time interval. It describes
bounds on the destination points 
computed by correct robots during a time interval $[t_1, t_2]$. It
states that if there is a subset of 
$f+1$ robots whose positions are less than $S_{max}$ during  $[t1,t2]$, then all destinations computed 
during $[t1, t2]$ by all correct robots in the network are necessarily smaller than $(S_{max}+Max(UD(t_1)))/2$.

\begin{lem}\label{limit-destinations}
Let a time $t_2>t_1$ and let $S(t)$ be a multiset of $f+1$ arbitrary elements in U(t).
If $\forall p \in S(t)$ and  $\forall t \in [t_1, t_2]~ p \leq
S_{max}$  then all 
calculated destination points at time interval $[t_1, t_2]$ are smaller than $(S_{max}+Max(UD(t_1)))/2$.
\end{lem}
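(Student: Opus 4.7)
The plan is to bound any destination calculated during $[t_1,t_2]$ by applying Lemma \ref{destination_mid} at the Look time that triggered this computation, and then to control the two terms of that bound separately: the first using the hypothesis on $S(t)$ together with Lemma \ref{uf-less-st}, and the second using the cautiousness of Algorithm \ref{alg-convergence-3f}.

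Concretely, I would fix a correct robot $i$ that computes a destination $D_i$ at some time $t\in[t_1,t_2]$, and let $t'\le t$ denote the time of $i$'s preceding Look, so that $D_i$ is computed from the snapshot $P(t')$. Working in a fixed global coordinate system (the bound of Lemma \ref{destination_mid} is invariant under the local coordinate changes), Lemma \ref{destination_mid} yields
\[
D_i \;<\; \frac{U_{f+1}(t') + Max(UD(t'))}{2}.
\]
For the first summand, note that $S(t')$ is a multiset of $f+1$ elements of $U(t')$, so Lemma \ref{uf-less-st} gives $U_{f+1}(t')\le \max(S(t'))$; the hypothesis of the present lemma then gives $\max(S(t'))\le S_{max}$, so $U_{f+1}(t')\le S_{max}$. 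For the second summand I would invoke cautiousness (Lemma \ref{cautious-FS}): every correct robot only moves toward a point inside the range of correct positions it observed at its last Look, and a routine induction on time then shows that $Max(U(\cdot))$ is non-increasing and, more generally, that $Max(UD(t'))\le Max(UD(t_1))$ whenever $t'\ge t_1$, since any destination currently in $D(t')$ was either already present at $t_1$ or was produced by a Look after $t_1$ and therefore lies in the then-current (smaller) range of correct positions. Plugging both bounds into the above inequality delivers the desired estimate $D_i<(S_{max}+Max(UD(t_1)))/2$.

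The delicate point that I expect to be the main obstacle is formalising the monotonicity $Max(UD(t'))\le Max(UD(t_1))$ in the CORDA model, where a correct robot may Look before $t_1$ but only Compute inside $[t_1,t_2]$: in that degenerate case the hypothesis on $S$ does not directly control $U_{f+1}(t')$, and one must fall back on cautiousness to argue that the resulting destination is nevertheless bounded by $Max(UD(t_1))$ — which then gives the target inequality in the only scenario where the bound could actually be binding, namely when $S_{max}\ge Max(UD(t_1))$. Everything else in the argument is a direct substitution into Lemma \ref{destination_mid}.
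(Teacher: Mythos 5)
Your argument is essentially the paper's own proof: both bound each computed destination via Lemma \ref{destination_mid}, control the $U_{f+1}$ term through Lemma \ref{uf-less-st} together with the hypothesis on $S$, and control the $\max(UD(\cdot))$ term through cautiousness. The subtlety you flag --- a robot whose Look precedes $t_1$ but whose Compute falls inside $[t_1,t_2]$ --- is silently ignored in the paper, which evaluates all quantities at the Compute time $t\in[t_1,t_2]$; your treatment is, if anything, the more careful of the two.
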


\begin{proof}
By definition of $S_{max}$ we have that $\forall t \in [t_1, t_2], max(S(t)) \leq S_{max}$.
According to Lemma~\ref{uf-less-st}, $\forall t \in [t_1, t_2]~  U_{f+1}(t) \leq max(S(t))$. 
So $\forall t \in [t_1, t_2],$ $U_{f+1}(t)$ $\leq$ $S_{max}$.

By Lemma~\ref{destination_mid}, 
each calculated destination point by each correct robot $i$ at time interval $[t_1, t_2]$ is smaller than 
$(U_{f+1}^i(t)+max(UD(t)))/2$, so because $U_{f+1}(t) \leq S_{max}$
these 
destinations points are also smaller than $(S_{max}+max(UD(t)))/2$ .
Since the algorithm is cautious, $\forall i, \forall t  \in [t_1,
t_2]~ max(UD(t)) \leq max(UD(t_1)) $ and the lemma follows.

\end{proof}

The next Lemma states that if some calculated destination point is in
the neighborhood of one end of the network, then a majority of $m-f$
correct robots are necessarily located in the neighborhood of this end.

\begin{lem}
\label{lem-majority}
If some correct robot $i$ executes its Look action at time $t$ and
then compute (in the Compute action which immediatly follows) a
destination $D_i$ such that $D_i<min(UD(t))+b$ (with $b$ any distance
smaller than $diameter(UD(t))/2$), then at t, there are at least $m-f$
correct robots whose positions are (strictly) smaller than $min(UD(t))+2b$.
\end{lem}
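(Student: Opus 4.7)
The plan is to unfold the definition of $D_i$ as $center(trim_f^i(P(t)))$, use Lemma~\ref{trim_sub_range} to lower bound $\min(trim_f^i(P(t)))$, invert the midpoint relation to get an upper bound on $\max(trim_f^i(P(t)))$, translate that bound into a bound on $P_{n-f}(t)$ in the sorted configuration, and finally count the correct robots lying below the threshold.

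First I would write $2 D_i = \min(trim_f^i(P(t))) + \max(trim_f^i(P(t)))$, since $center$ returns the midpoint. Lemma~\ref{trim_sub_range} gives $\min(trim_f^i(P(t))) \geq \min(U(t))$, and because $U(t) \subseteq UD(t)$ we also have $\min(U(t)) \geq \min(UD(t))$. Combined with the hypothesis $D_i < \min(UD(t)) + b$, this yields $\max(trim_f^i(P(t))) = 2 D_i - \min(trim_f^i(P(t))) < 2(\min(UD(t)) + b) - \min(UD(t)) = \min(UD(t)) + 2b$, with strict inequality.

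Second, from the explicit definition of $trim_f^i$, the upper endpoint of $trim_f^i(P(t))$ is $P_{maxindex_i}(t)$, which is either $P_{n-f}(t)$ or $P_i(t)$ (the latter case only when $P_i(t) > P_{n-f}(t)$); in both cases $\max(trim_f^i(P(t))) \geq P_{n-f}(t)$. Hence $P_{n-f}(t) < \min(UD(t)) + 2b$ strictly, and since the $P_j(t)$ are sorted, all of $P_1(t),\ldots,P_{n-f}(t)$ satisfy the same strict bound. This exhibits at least $n-f$ positions of $P(t)$ strictly below $\min(UD(t)) + 2b$.

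Finally, since the system contains exactly $n-m$ Byzantine robots, at most $n-m$ of these $n-f$ positions can belong to Byzantine robots, so at least $(n-f)-(n-m) = m-f$ of them belong to correct robots, which is the desired conclusion. The main subtlety, and the only non-routine step, lies in this last count: the naive worst case ``at most $f$ Byzantine robots'' would only yield $n-2f$ correct robots below the threshold, but the sharper bound $m-f$ required by the statement is obtained by using the actual Byzantine count $n-m$ rather than its worst-case upper bound $f$, so that any shortfall in faulty robots strictly increases the tally of correct ones.
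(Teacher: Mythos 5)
Your proof is correct and takes essentially the same route as the paper's: both derive $\max(trim_f^i(P(t))) < \min(UD(t)) + 2b$ from the midpoint property of $center$ combined with the lower bound $\min(trim_f^i(P(t))) \geq \min(UD(t))$ supplied by Lemma~\ref{trim_sub_range}, and then conclude by counting. The only cosmetic difference is in that last step: you count the $n-f$ untrimmed positions lying strictly below the threshold and subtract the $n-m$ Byzantine robots, whereas the paper equivalently observes that at most $f$ (correct) positions can lie above $\max(trim_f^i(P(t)))$, so at least $m-f$ correct positions lie below.
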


\begin{proof}
We prove first that at $t$, $max(trim_f^i(P(t))) <= min(UD(t)) + 2b$.
According to Lemma~\ref{trim_sub_range}, $min(trim_f^i(P(t))) \geq min(UD(t))$. 
And we have by hypothesis that $D_i < min(UD(t))+b$. This gives us $D_i < min(trim_f^i(P(t)))+b$. 
But $D_i$ is the center of $trim_f^i(P(t))$ which means that
$distance(D_i, min(trim_f^i(P(t))))$ 
must be equal to $distance(D_i, max(trim_f^i(P(t))))$. 
Thus, $max(trim_f^i(P(t)))) < D_i + b$. And since by hypothesis $D_i <
min(UD(t))+b$, 
we have \[max(trim_f^i(P(t)))) < min(UD(t)) + 2b\]

which means that at $t$ there are
at 
most $f$ correct positions greater than $min(UD(t)) + 2b$, and by
definition 
no correct position is smaller than $min(UD(t))$. 
It follows that at $t$, the range $[min(UD(t)),$ $min(UD(t))+2b)$ contains at least $m-f$ correct positions.
\end{proof}

We are now ready to give the proof of shrinkingness of our algorithm in the CORDA model. The general idea of the proof is to show that the destination points computed by correct robots are located either around the middle of the range of correct positions or/and in the neighborhood of only one end of this range.

If all computed destinations are located around the middle of the
range of correct robots 
then the diameter of this range decreases and the algorithm is
shrinking. 
Otherwise, if some computed destinations are located in the
neighborhood of one end of the range, 
it is shown that there is a time at which no correct robot will be in
the neighborhood 
of the other end of the range, which leads again to a decrease in the
range 
of correct positions and shows that the algorithm is shrinking.

In this section we address the correctness of Algorithm \ref{alg-convergence-3f} in the CORDA model under a $k$-bounded scheduler. Our proof is constructed on top of the auxiliary lemmas proposed in the previous section.

\begin{lem}
\label{shrinking_corda}
Algorithm \ref{alg-convergence-3f} is shrinking in the CORDA model with $n>3f$ under a k-bounded scheduler.
\end{lem}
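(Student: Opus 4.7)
Fix a time $t_0$ with $d = diam(U(t_0)) > 0$, set $L = \min(UD(t_0))$, $R = \max(UD(t_0))$, and fix the threshold $b = d/4$. Using the $k$-boundedness of the scheduler together with the $\delta$ lower bound on each Move phase, I would pick a window $[t_0,t_0+T]$ inside which every correct robot completes at least $\lceil d/\delta\rceil+1$ full Look-Compute-Move cycles; such a $T$ exists because $k$-boundedness gives a uniform lower bound on each robot's activation rate. The plan is then a case analysis on the destinations computed by correct robots during this window, aiming for a multiplicative shrinking factor $\alpha=3/4$.

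Case~(i): every destination computed by every correct robot during $[t_0,t_0+T]$ lies in the central band $[L+b,R-b]$. By cautiousness (Lemma~\ref{cautious-FS}) destinations never escape $[L,R]$, so the extreme correct robots keep receiving destinations inside $[L+b,R-b]$. Since each Move advances a robot by at least $\delta$ towards its destination (or completes it), after $\lceil b/\delta\rceil$ activations every correct robot must sit inside $[L+b,R-b]$, giving $diam(U(t_0+T))\le d-2b=d/2$.

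Case~(ii): some correct robot $i$ computes at time $t_1\in[t_0,t_0+T]$ a destination $D_i<L+b$ (the subcase $D_i>R-b$ is symmetric). Lemma~\ref{lem-majority} then yields that at time $t_1$ at least $m-f\ge f+1$ correct robots lie below $L+2b=L+d/2$. I would then show, by induction on activations and using Lemma~\ref{destination_mid} to upper-bound the destination of each of these ``left'' robots, that throughout a subinterval $[t_1,t_1+T']\subseteq[t_0,t_0+T]$ at least $f+1$ correct robots remain below some threshold of the form $L+2b+O(\delta)$. Lemma~\ref{limit-destinations} applied to that subinterval then forces every destination computed there to be strictly less than $(L+2b+R)/2=R-d/4$. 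A second application of $k$-boundedness lets the rightmost correct robots execute enough cycles during $T'$ to land below $R-d/4$, yielding $diam(U(t_0+T))\le(R-d/4)-L=3d/4$.

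The main technical obstacle is exactly Case~(ii): sustaining the ``$f+1$ correct robots on the left'' hypothesis long enough for the rightmost correct robots to migrate inward. The argument must exploit both the $\delta$ lower bound (so the left-majority cannot collapse in a single activation since each left robot drifts by at most $\delta$) and the bound provided by Lemma~\ref{destination_mid} (so no left robot can be instructed to jump past the midpoint in one hop). Balancing $b$, $T$, and $T'$ so that the rightmost robots have time to migrate before the left-majority collapses is the crux, and this is where the $k$-bounded assumption plays its quantitative role.
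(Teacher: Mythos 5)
Your overall architecture matches the paper's: a two-case split on whether some computed destination falls within distance $b$ of an end of $range(UD(t_0))$, with Case~(i) handled by cautiousness and Case~(ii) by Lemma~\ref{lem-majority}, Lemma~\ref{destination_mid} and Lemma~\ref{limit-destinations}. Case~(i) is essentially the paper's Case~A. But the crux of Case~(ii) --- which you correctly identify as the main obstacle --- is resolved incorrectly. You claim the left majority of $m-f$ robots ``remains below some threshold of the form $L+2b+O(\delta)$'' because ``each left robot drifts by at most $\delta$'' per activation. This misreads the model: $\delta$ is a \emph{lower} bound on the distance a robot travels before the scheduler may stop it, not an upper bound on its displacement; in a single Move a left robot may travel all the way to its computed destination, which by Lemma~\ref{destination_mid} can be as far right as roughly the midpoint between the current left group and $\max(UD(t_0))$. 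So the left majority can migrate substantially rightward at every activation, and your $L+2b+O(\delta)$ invariant fails after a single cycle. Consequently the bound ``all destinations $< (L+2b+R)/2$'' and the resulting factor $\alpha=3/4$ do not hold.

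The paper's resolution is different and is exactly where $k$-boundedness does its quantitative work: rather than freezing the left group, one tracks the \emph{gap} between $\max(S(t))$ (for a fixed submultiset $S$ of $f+1$ left robots together with their destinations) and $\max(UD(t_0))$, and shows via Lemma~\ref{destination_mid} that each activation of a robot of $S$ can at worst \emph{halve} this gap. Between $t_2$ and the time $t_3$ at which every correct robot has been activated once more, each of the $f+1$ robots of $S$ is activated at most $k$ times, so the gap suffers at most $k(f+1)$ halvings and remains at least $3d_0/2^{k(f+1)+2}$. Lemma~\ref{limit-destinations} then caps every destination computed in $[t_2,t_3]$ at $\max(UD(t_0))-3d_0/2^{k(f+1)+3}$, yielding the (much weaker but correct) shrinking factor $\alpha=\max\{\alpha_1,\,1-\delta/d_0,\,1-3/2^{k(f+1)+3}\}$. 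You need to replace your ``left group stays put up to $O(\delta)$'' step with this geometric-decay accounting of activations; without it the balance between $T$ and $T'$ you allude to cannot be made to work.
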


\begin{proof}
Let $U(t_0)=\{U_1(t_0), ..., U_m(t_0)\}$ be the configuration of correct robots at
initial time $t_0$ and $D(t_0)=\{D_1(t_0), ..., D_m(t_0)\}$ the
multiset of their calculated 
destination points at the same time $t_0$ and $UD(t_0)$ is the union
of $U(t_0)$ and $D(t_0)$. 
Let $t_1$ be the first time at which all correct robots have been
activated and executed their 
Compute cycle at least once since $t_0$ ($U(t_1)$ and $D(t_1)$ are the corresponding multisets of positions and destinations).
Assume that robots are ordered 
from left to right and define $d_0$ and $d_1$ as their diameters at $t_0$ and $t_1$ respectively. Since the model is asynchronous, the diameter calculation takes into account both the positions and the destinations of robots. So $d_0=diameter (UD(t_0)$ and $d_1=diameter (UD(t_1))$. Let $b$ be any distance that is smaller than $d_0/4$, for example take $b=d_0/10$.

We consider the actions of correct robots after $t_1$ and we separate the analysis into two cases:

\begin{itemize}
\item \emph{Case A}: All calculated destinations by all correct robots after
$t_1$ are inside $[min(UD(t_0))+b, max(UD(t_0))-b]$. 
So when all correct robots are activated at least once, their diameter
decreases by at least 
$\min\{2\delta, 2b=d_0/5\}$. Thus by setting $\alpha_1=\max\{1-2\delta/d_0, 4/5\}$, the algorithm is shrinking.
\linebreak

\item \emph{Case B}: Let $t_2 > t_1$ be the first time when a robot, say $i$,
execute a 
Look action such that the Compute action that follows compute a destination point, say $D_i$, that is outside $[min(UD(t_0))+b, max(UD(t_0))-b]$. 
This implies that either ($D_i < min(UD(t_0))+b$) or ($D_i > max(UD(t_0))-b$). 
Since the two cases are symmetric, we consider only the former which
implies according to 
Lemma \ref{lem-majority} that the range $[min(U D(t_0)) , min(UD(t_0))+2b]$ must contain at least $m-f$ correct positions.

If some robots among these $m-f$ robots are executing a Move action,
their destination points have necessarily 
been calculated after $t_0$ (since at $t_1$ each robot has been
activated at least once). 
And we have by lemma \ref{lem-halfdiameter} that the distance between
each robot and its destination can not exceed half the diameter, so we
conclude that at $t_2$ 
the destination points of these $m-f$ robots are all inside $[min(UD(t_0)) , min(UD(t_0))+b+d_0/2]$.

Let $S(t_2)$ be a submultiset of $UD(t_2)$ containing the positions
and destinations of $f+1$ arbitrary robots among these $m-f$ whose
positions and destinations 
are inside $[min(UD(t_0)) , min(UD(t_0))+b+d_0/2]$. So
$\max(S(t_2))<=min(UD(t_0))+b+d_0/2$. 
And since we choosed $b<d_0/4$, we have  $\max(S(t_2)) < max(UD(t_0))-3d_0/4$.
Let $t_3\geq t_2$ be the first time each correct robot in the system has been activated at least once since $t_2$. 
We prove in the following that at $t_3$, $\max(S(t_3)) < max(UD(t_0))-3d_0/2^{k(f+1)+2}$.

To this end we show that the activation of a single robot of
$S(t)$ can not reduce the distance between the 
upper bound of $max(S)$ and $max(UD(t_0))$ by more than half its
precedent value, and since 
the scheduler is k-bounded, we can guarantee that this distance at
$t_3$ 
is at least equal to $3d_0/2^{k(f+1)+2}$.

According to Lemma~\ref{destination_mid}, if some robot $i$ calculates
its destination $D_i$ at time $t \in [t_2, t_3]$, $D_i \leq
(U_{f+1}(t) + max(UD(t)))/2$. 
But $U_{f+1}(t) \leq max(S(t))$ by Lemma \ref{uf-less-st} and $max(UD(t)) <= max(UD(t_0))$ due to cautiousness. 
This gives us $D_i \leq (max(S(t) + max(UD(t_0)))/2$.
Therefore, an activation of a single robot in $S(t)$ to execute its
Compute cycle can reduce the 
distance between $Max(UD(t_0))$ and $max(S(t))$ by at most half its precedant value.
 
So at $t_3$, after a maximum of $k$ activations of each robot in
$S(t)$, we have 
$max(S(t_3)) <= Max(UD(t_0))-3d_0/2^{k(f+1)+2}$, and by Lemma~\ref{limit-destinations}, all calculated
destinations by all correct robots between $t_2$ and $t_3$ are less than or equal to $Max(UD(t_0))-3d_0/2^{k(f+1)+3}$. 

Since robots are guaranteed to move toward their destinations by at
least a distance $\delta$ before they can be stopped by the scheduler,
after $t_3$, no robot will be located beyond $Max(UD(t_0))-min\{\delta, 3d_0/2^{k(f+1)+3}\}$.
Hence by setting $\alpha=\max\{\alpha_1, 1-\delta/d_0,1-3/2^{k(f+1)+3}\}$ the lemma follows.

\end{itemize}

\end{proof}

The convergence proof of Algorithm \ref{alg-convergence-3f} directly follows from 
Lemma \ref{shrinking_corda} and Lemma \ref{cautious-FS}.

\begin{thm}
Algorithm \ref{alg-convergence-3f} solves the Byzantine convergence problem in the CORDA model for
$n>3f$ under a $k$-bounded scheduler.
\end{thm}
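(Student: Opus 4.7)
The plan is to combine the two structural lemmas just proved with Theorem \ref{th:cands}. First I would invoke Lemma \ref{cautious-FS}, which establishes that for $n>3f$ every destination a correct robot computes lies in $range(U(t))$, regardless of what the Byzantine robots do; hence $diam(U(\cdot))$ is non-increasing over time even under a Byzantine adversary. Next I would invoke Lemma \ref{shrinking_corda}, which establishes the shrinking property in the CORDA model under a $k$-bounded scheduler: there exists $\alpha \in (0,1)$ such that from any configuration at time $t$, some later time $t'>t$ satisfies $diam(U(t')) < \alpha \cdot diam(U(t))$.

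The next step is to iterate the shrinking bound. Applying Definition~\ref{def:shrink} repeatedly starting from the initial time $t_0$ yields an increasing sequence $t_0 < t_1 < t_2 < \cdots$ with $diam(U(t_j)) < \alpha^{j} \cdot diam(U(t_0))$. Cautiousness guarantees that the diameter of correct positions never grows in between these landmarks, so $diam(U(t)) \to 0$ as $t \to \infty$. Given any $\epsilon > 0$, pick $j$ large enough that $\alpha^{j} \cdot diam(U(t_0)) < \epsilon$; then for every $t \geq t_j$ and every pair of correct robots $i,i'$ one has $\mathit{distance}(U_i(t), U_{i'}(t)) \leq diam(U(t)) < \epsilon$, which is exactly the Byzantine convergence specification of Definition~\ref{def:byz-convergence}.

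The one subtlety I anticipate is that Theorem \ref{th:cands} is stated for fault-free networks, so it cannot be applied as a completely black box. However, its proof only manipulates the correct-robot multiset $U(t)$, and both of our structural lemmas establish cautiousness and shrinkingness of $U(t)$ \emph{in the presence of} up to $f$ Byzantine robots (they quantify over correct positions only and the bounds they provide are uniform in the adversary's strategy). Consequently the implication ``cautious $+$ shrinking $\Rightarrow$ convergence'' transfers verbatim to the Byzantine setting, either by invoking Theorem~\ref{th:cands} with this observation or by spelling out the two-line argument above. This is really the only piece of bookkeeping in the proof; there is no deeper obstacle.
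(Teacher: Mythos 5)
Your proposal is correct and follows essentially the same route as the paper, which likewise concludes by combining Lemma~\ref{cautious-FS} and Lemma~\ref{shrinking_corda} via Theorem~\ref{th:cands}. Your extra remark about Theorem~\ref{th:cands} being stated for fault-free networks --- and your explicit iteration of the shrinking factor to handle the Byzantine setting --- is a careful addition that the paper glosses over, but it does not change the argument.
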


\section{Conclusions and discussions}
\label{sec:conclusion}
In this paper we consider networks of oblivious robots with arbitrary initial locations and no agreement on a global coordinate system. Robots obtain system related information only 
\emph{via} visual sensors and some of them are Byzantine (\emph{i.e.} they can exhibit arbitrary behavior). 
In this weak scenario, we studied the \emph{convergence} problem that requires that all
robots to asymptotically approach the exact same, but unknown beforehand,
location. We focused on the class of cautious algorithms, which guarantees that correct robots 
always move inside the range of positions held by correct robots. 
In this class we proposed an optimal byzantine resilient solution 
when robots execute their actions asynchronously as defined in the CORDA model. That is, 
our solution tolerates $f$ byzantine robots in $3f+1$-sized networks, which matches previously established lower bound.
  
Two immediate open problems are raised by our work:
\begin{itemize}
\item Our algorithm is proved correct under bounded scheduling assumption. We conjecture that this hypothesis is necessary for achieving convergence in the class of cautious algorithms.
\item The study of asynchronous byzantine-resilient convergence in a multi-di\-men\-sio\-nal space is still open. 
\end{itemize} 

\bibliographystyle{plain}
\bibliography{convergence}
\end{document}